\newcommand{\beh}[1]{\llbracket #1\rrbracket}
\newcommand{\overbar}[1]{\mkern 1.7mu\overline{\mkern-1.7mu#1\mkern-1.7mu}\mkern 1.7mu}
\newtheorem{theorem}{Theorem}[section]
\newtheorem{example}[theorem]{Example}
\def\endproof{\qed\endtrivlist}
\let\csname endproof*\endcsname=\endproof
\def\qedsymbol{\ifmmode\bgroup\else$\bgroup\aftergroup$\fi
  \vcenter{\hrule\hbox{\vrule height.6em\kern.6em\vrule}\hrule}\egroup}
\def\qed{\ifmmode\else\unskip\nobreak\fi\quad\qedsymbol}
\DeclareFontFamily{OT1}{pzc}{}
\DeclareFontShape{OT1}{pzc}{m}{it}{<-> s * [1.200] pzcmi7t}{}
\DeclareMathAlphabet{\mathpzc}{OT1}{pzc}{m}{it}
\title{Weighted Automata over Vector Spaces\footnote{This research was supported by the Science Fund of the Republic of Serbia, Grant no 7750185, Quantitative Automata Models: Fundamental Problems and Applications - QUAM}}
\author{Nada Damljanovi\'c
\institute{University of Kragujevac, Faculty of Technical Sciences, Svetog Save 65, \v Ca\v cak, Serbia}
\email{nada.damljanovic@ftn.kg.ac.rs}
\and
Miroslav \'Ciri\'c\qquad\qquad Jelena Ignjatovi\'c
\institute{University of Ni\v s, Faculty of Sciences and Mathematics, Vi\v segradska 33, Ni\v s, Serbia}
\email{\quad miroslav.ciric@pmf.edu.rs \quad\qquad jelena.ignjatovic@pmf.edu.rs}
}
\begin{document}
\maketitle

\begin{abstract}
In this paper we deal with three models of weighted automata that take  weights in the field of real numbers.~The first of these models are classical weighted finite automata, the second one are crisp-deter\-ministic weighted automata, and the third one are weighted automata over a vector space.~We explore the interrelationships between weighted automata over a vector space and other two models. 
\end{abstract}

\section{Introduction}

Weighted automata belong to the fundamental models of computation in computer science.~They can be understood as an extension of conventional automata in which the transitions and states carry numerical or other values called weights.~These weights may model quantitative proper\-ties like the cost, the amount of resources needed for the execution of a transition, the reliability or probability of the successful execution of the transitions, or many other things.~Different models of weighted automata differ in the algebraic structures within which the weights are taken, as well as in the way in which these weights are manipulated.

In this paper, we deal with weighted automata that take weights in the field of real numbers.~Such automata have been the subject of study since the very beginning of the theory of weighted automata, since the seminal work of Sch\"utzenberger \cite{Schutz.61} who studied weighted automata over the field.~Today, they are very popular due to their significant applications, primarily in formal specification and verification of systems, as well as in the field of machine learning, where they are successfully used as an alternative to recurrent neural networks.~We discuss three models of weighted automata with weights taken in the field of real numbers.

The first of these models are classical \textit{weighted finite automata}.~The common way of viewing deterministic and nondeterministic finite automata as labelled graphs has also been used for weighted finite automata from the very beginning of their studying.~From such a point of view, a weighted finite automaton is represented by a directed multi-graph whose edges carry two labells, the input~letter and the weight, while nodes carry two weights, the initial and terminal weight.~The computation along a path in the graph is performed by concatenation of the input letters and multiplication of the initial weight of the starting node, the weights of edges along the path, and the terminal weight of the final node, and then the sum of the weights of all paths labelled with the same input word is computed and assigned to this input word.~This determines the behavior of the considered weighted finite automaton, that is, the word function computed by that automaton.~Such an understanding of the behaviour can be called the \textit{dept-first semantics}.~Another way of looking at weighted finite automata, through vector and matrix operations, has also been present since their very beginnings.~From that point of view, the behaviour of a weighted finite automaton can be expressed as the product of the row vector representing the initial weights, matrices representing the weights of the transitions induced by input letters, and the column vector representing the terminal weights.~Such a representation of a weighted finite automaton is called a \textit{linear representation}, while such an understanding of the behaviour can be called the \textit{breadth-first semantics}.~In the case of weighted finite automata over a semiring these two semantics are the same.~The linear algebraic approach proved to be extremely powerful and useful, especially in the study of simulations and bisimulations, as well~as in the reduction of the number of states.~That approach was successfully applied to nondeterministic finite automata \cite{CIBJ.14}, fuzzy finite automata \cite{CIDB.12,CIJD.12,CSIP.10,SCI.14,SCB.18} and weighted finite automata over an additively-idempotent semiring \cite{DCI.14}, and research is underway in which that approach is applied in the context of weighted automata over the max-plus semiring and the field of real numbers.~The mentioned approach also plays a key role in this paper. 

The second model of weighted automata that we deal with here are the so-called \textit{crisp-determin\-is\-tic weighted automata}.~These are classical automata with a single initial state and deterministic transitions in which the set of terminal states is replaced by a function which assigns a terminal weight to each state.~When such an automaton starts working from the initial state and performs a sequence of transitions conducted by a given input word, the weight assigned to that word is the terminal weight of the destination state.~Those automata were studied for the first time in \cite{ICB.08}, in~the context of fuzzy automata, and the most general definition of crisp-deterministic weighted automata~was~given~in~\cite{ICBP.10}. The name crisp-deterministic was introduced in \cite{CDIV.10} to distinguish it from a related type of automata for which the name deterministic weighted automata is used.~An extensive study of crisp-determinis\-tic weighted automata was carried out in \cite{ICBP.10}, and in \cite{CDIV.10,ICB.08,JC.14,JIC.11,JMIC.16,MJIC.15} various procedures for~converting a weighted finite automaton into an equivalent crisp-deterministic weighted automaton were provided.~Such procedures are called \textit{crisp-determinization}.~If we allow a crisp-deterministic weighted automaton to have an infinite set of states, as we do in this paper, then any weighted finite automaton can be converted into an equivalent crisp-de\-ter\-ministic weighted automaton, and the basic problem is to perform such a conversion that will provide an equivalent crisp-deterministic weighted automaton with a finite number of states, as small as possible.~For information on crisp-de\-ter\-minization of weighted tree automata we refer to \cite{DFKV.20,FKV.21}.

The main role in the crisp-determinization is played by the concept of the \textit{Nerode automaton} assigned to the weighted finite automaton that is determinized.~The construction of the Nerode autom\-aton was first introduced in \cite{ICB.08} as a counterpart to the \textit{accessible subset construction} on which the determinization of classical nondeterministic finite automata is based.~According to that construction, the states of a Nerode automaton are vectors with entries from the underlying structure of weights, but in the mentioned papers dealing with crisp-determinization, such nature of states was neglected, and the Nerode automaton was considered as an ordinary crisp-deterministic weighted automaton.~If the vector nature of states is taken into account, this leads us to the third model of weighted automata that is considered here, to \textit{weighted automata over a vector space} or \textit{weighted automata with vector states}.~Various forms of such automata were studied in \cite{BGP.17,BGP.22,BBBRS.12,Bor.09,CP.17,LRP.18}, and a related model of au\-tom\-ata, called \textit{automata with fuzzy states}, was studied within the framework of fuzzy automata theory (see \cite{SCI.15} and sources cited there).~The concept of a weighted automaton over a vector space discussed here differs slightly from the corresponding concepts studied in the cited articles.~The first difference concerns the underlying vector space.~Except in \cite{CP.17}, in all the other mentioned articles, it is assumed that this vector space is finite-dimensional.~Here we not only allow that space to be infinite-dimensional, but also introduce an extremely interesting weighted automaton over an infinite-dimensional space, the so-called \textit{derivative automaton}.~The second difference concerns the set of states of these automata.~In all the mentioned articles, except in \cite{BBBRS.12}, states are assumed to be all vectors from the underlying vector space $V$.~However, in that case the set of states is always infinite and a huge number of states are unreachable from the initial state, and therefore redundant.~For this reason, we take the set of states to be a subset of $V$, which can be both finite and infinite.~The third difference relates to transition functions.~In almost all cited articles, the transition functions induced by the input letters were required to be linear operators on $V$.~In \cite{LRP.18}, a more general model of weighted automata over a vector space was proposed, where the transition functions~do not have to be linear.~This leads to the distinction between \textit{linear} and \textit{nonlinear} weighted automata over a vector space.~Here we give a  definition of a linear weighted automaton over a vector space which also includes the cases when the set of states is not the entire vector space and when the underlying vector space is infinite-dimensional.

This paper is the beginning of our extensive investigations of weighted automata with weights taken in the field of real numbers, and our aim here is to examine some general relations between weighted automata over vector spaces and other two models.~First, by Theorem 4.1, we show that any crisp-de\-ter\-ministic weighted automaton can be naturally turned into a language-equivalent weighted autom\-aton over a vector space, where the set of vector states can be any set of vectors that has the same cardinality as the set of states of that crisp-deterministic weighted automaton.~Then by Theorem 4.2 we show that any finite-dimensional linear weighted automaton over a vector space can be turned into a completely language-equivalent weighted finite automaton, and conversely, any weighted finite automaton can be turned into a completely language-equivalent finite-dimensional linear weighted automaton over a vector space.~Actually, we show that the previously mentioned Nerode automaton of a weighted finite automaton $\mathpzc{A}$ is a finite-dimensional linear weighted automaton over a vector space that is completely equivalent to $\mathpzc{A}$.~Theorem 4.3 gives us an elegant procedure for checking whether a given finite weighted automaton over a vector space is linear.~At the end of the paper, we introduce the concept of the derivative automaton of a given word function and prove that it is a linear weighted automaton over a vector space that computes this word function and generates its prefix closure.~In addition, we show that the derivative automaton is a minimal weighted automaton over a vector space which computes this word function.

\section{Preliminaries}

Throughout this paper, $\mathbb N$ denotes the set of all natural numbers (without zero)  and $\mathbb R$ denotes the field of real numbers. For $i, j \in \mathbb N$ such that $i \leq j$  we use the notation $[i..j] = \{k \in \mathbb{N}\, |\, i \leq k \leq j\}$.

A \textit{vector space} over $\mathbb{R}$ is a triple $(V,+,\cdot)$ such that:
\begin{itemize}
    \item[$\ast$] $V$ is a non-empty set, whose members are called \textit{vectors}; 
    \item[$\ast$] $+:V\times V\to V$ given by $+:(\alpha,\beta)\mapsto \alpha+\beta $, for $\alpha,\beta\in V$, is the \textit{vector addition} operation;
    \item[$\ast$] $\cdot :\mathbb{R}\times V\to V$ given by $\cdot:(r,\alpha)\mapsto r\cdot \alpha $, for $r\in \mathbb{R}$, $\alpha\in V$, is the \textit{scalar multiplication} operation;
    \item[$\ast$] vector addition and scalar multiplication satisfy the following axioms:
    \begin{itemize}
        \item[(V1)] $(V,+)$ is a commutative group,
        \item[(V2)] $r\cdot (\alpha+\beta)=r\cdot \alpha+r\cdot \beta$,
        \item[(V3)] $(r+s)\cdot \alpha=r\cdot \alpha+s\cdot \alpha$,
        \item[(V4)] $(r\cdot s)\cdot \alpha=r\cdot (s\cdot \alpha)$,
        \item[(V5)] $1\cdot \alpha =\alpha $,
    \end{itemize}
    for all $r,s\in \mathbb{R}$ and $\alpha,\beta\in V$.
\end{itemize} 
Note that a vector space over an arbitrary field can be defined in the same way.

The basic example of a vector space over $\mathbb{R}$ is the vector space $\mathbb{R}^n$ consisting of all $n$-tuples of real numbers, with vector addition and scalar multiplication defined coordinatewise.~Another example of~a~vector space over $\mathbb{R}$ that is important here is the vector space $\mathbb{R}^T$ consisting of all functions from a set $T$~into~$\mathbb{R}$, with vector addition and scalar multiplication defined by $(\alpha+\beta)(t)=\alpha(t)+\beta(t)$ and $(r\cdot \alpha)(t)=r\cdot \alpha(t)$, for all $\alpha,\beta\in \mathbb{R}^T$, $r\in \mathbb{R}$ and $t\in T$.~Such vector spaces are called \textit{function spaces}.

Let $V$ and $W$ be vector spaces over $\mathbb{R}$.~A function $h:V\to W$ is called a \textit{homomorphism} or \textit{linear transformation} of $V$ into $W$ if $h(\alpha+\beta)=h(\alpha)+h(\beta)$ and $h(r\cdot \alpha)=r\cdot h(\alpha)$, for all $\alpha,\beta\in V$ and $r\in \mathbb{R}$.~If~$h$ is a bijective homomorphism, then it is called an \textit{isomorphism} of $V$ into $W$, and we say that~$V$~and $W$ are \textit{isomorphic} vector spaces.~A vector space $V$ over $\mathbb{R}$ is said to be \textit{finite-dimensional} if it is isomorphic to the vector space $\mathbb{R}^n$, for some $n\in \mathbb{N}$.~In this case $n$ is the unique natural number having this property and it is called the \textit{dimension} of $V$.~A vector space which is not finite-dimensional is called \textit{inifinite-dimensional}.~A homomorphism (linear transformation) of a vector space $V$ into itself is called a \textit{linear operator} on $V$.

Let $V$ be a vector space over $\mathbb{R}$.~A \textit{linear combination} of vectors $\alpha_1,\alpha_2,\ldots,\alpha_k\in V$ is any expression of the form $r_1\cdot\alpha_1+r_2\cdot\alpha_2+\cdots +r_k\cdot\alpha_k$, where $r_1,r_2,\ldots,r_k\in \mathbb{R}$.~For any set $S\subseteq V$, the set of all linear combinations of vectors from $S$ is called the \textit{span} of $S$ and denoted by $\mathrm{span}(S)$.~In other words,
\[
\mathrm{span}(S)=\{\alpha\in V\,|\, (\exists k\in \mathbb{N})(\exists \alpha_1,\alpha_2,\ldots,\alpha_k\in S)(\exists r_1,r_2,\ldots,r_k\in \mathbb{R})\, \alpha=r_1\cdot\alpha_1+r_2\cdot\alpha_2+\cdots +r_k\cdot\alpha_k\}.
\]
It is well-known that $\mathrm{span}(S)$ is a vector space with vector addition and scalar multiplication inherited from $V$, i.e., it is a \textit{subspace} of $V$.

Given natural numbers $m, n \in \mathbb{N}$. A \textit{matrix} of type $m \times n$ with entries in the field of real numbers $\mathbb R$, or
a real $m \times n$-matrix, is  defined as a mapping $M : [1..m] \times [1..n] \to \mathbb R$. For a pair $(i, j) \in [1..m] \times [1..n]$ the value $M(i, j)$ is called the $(i, j)$-entry of the matrix $M$.~The set of all real matrices of type $m\times n$ is denoted by ${\mathbb R}^{m\times n}$. Similarly, a \textit{vector} of length $n$ with entries in $\mathbb R$, or real vector is defined as a mapping $\nu : [1..n] \to {\mathbb R}$. For each $i \in [1..n]$ the value $\nu(i)$ is called the $i$th entry or $i$th coordinate of the vector $\nu$. The set of all real vectors of length $n$  is denoted by $\mathbb{R}^n$.

The zero matrix of type $m \times n$, denoted by $O_{m\times n}$, is a matrix of type $m \times n$ whose all entries are~$0$. Similarly, the zero vector of length $n$, denoted by $o_n$, is a vector of length $n$ whose all entries are $0$. For each $n \in \mathbb{N}
$, a matrix of type $n \times n$ is called a \textit{square matrix} of order $n$.~The identity matrix of order $n$, denoted by $I_n$, is a square matrix of order $n$ which satisfies $I_n(i, i) = 1$, for each $i \in [1..n]$, and $I_n(i, j) = 0$,
for all $i, j \in [1..n]$ such that $i \neq j$.~The transpose of a matrix $M$ is denoted by $M^{\top}$.~For a matrix $M\in \mathbb{R}^{m\times n}$ and $k\in [1..n]$, by $c_k(M)$ we denote the $k$th column vector of $M$.

For all pairs of matrices from ${\mathbb R}^{m\times n}$ the \textit{matrix addition} is defined pointwise: 
\begin{equation}
(M + N)(i, j) = M(i, j) + N(i, j),
\end{equation}
for all $M,N \in {\mathbb R}^{m\times n}, i \in [1..m]$ and $j \in
[1..n]$. It is an associative and commutative operation on ${\mathbb R}^{m\times n}$, and in particular, $({\mathbb R}^{m\times n},+,O_{m\times n})$
forms a commutative monoid. The \textit{matrix product} is defined between matrices from ${\mathbb R}^{m\times n}$ and ${\mathbb R}^{n\times p}$
as follows: for $M \in {\mathbb R}^{m\times n}$ and $N \in {\mathbb R}^{n\times p}$ their product is a matrix $M \cdot N \in  {\mathbb R}^{m\times p}$ with entries
given by
\begin{equation}
(M \cdot N)(i, k) = \sum_{j=1}^n
M(i, j) \cdot M(j, k), 
\end{equation}
for all $(i, j) \in [1..m]\times[1..p]$. The matrix product is associative whenever it is defined, i.e.,
$(M \cdot N) \cdot P = M \cdot (N \cdot P)$, for all $M \in {\mathbb R}^{m\times n}$, $N \in {\mathbb R}^{n\times p}$ and $P \in  {\mathbb R}^{p\times q}$. In
particular, $({\mathbb R}^{n\times n},+, \cdot,O_{n\times n}, I_n)$ is a semiring. Given a matrix $M \in  {\mathbb R}^{m\times n}$ and vectors $\mu \in {\mathbb R}^m$
and $\nu \in  {\mathbb R}^n$. When $\mu$ is treated as a matrix of type $1\times m$ (row vector) and $\nu$ as a matrix of type $n\times 1$ (column vector), the \textit{vector-matrix product} $\mu \cdot M$ and the \textit{matrix-vector product} $M \cdot \nu$ are defined as
matrix products. The \textit{scalar product} or \textit{dot product} of vectors $\mu, \nu \in {\mathbb R}^n$ is an element of $\mathbb R$ given by
\begin{equation}
\mu \cdot \nu =\sum_{i=1}^n
\mu(i) \cdot \nu(i). 
\end{equation}

A matrix $M\in \mathbb{R}^{m\times n}$ is said to be in the \textit{row echelon form} if it satisfies the following properties:
\begin{itemize}
    \item[$\ast$] If a row of $M$ does not consist entirely of zeros, then the first nonzero entry in this row is $1$. It is called a \textit{leading} $1$.
    \item[$\ast$] If there are any rows that consist entirely of zeros, then they are grouped together at the bottom of the matrix $M$.
    \item[$\ast$] In any two successive rows of $M$ that do not consist entirely of zeros, the leading $1$ in the~lower row occurs farther to the right than the leading $1$ in the higher row.
\end{itemize}
Moreover, $M$ is said to be in the \textit{reduced row echelon form} if, in addition to these three properties, it also satisfies the condition
\begin{itemize}
    \item[$\ast$] Every column of $M$ that contains a leading $1$ has zeros everywhere else.
\end{itemize}
Every matrix $N\in \mathbb{R}^{m\times n}$ can be transformed to a row echelon form or a reduced row echelon form by applying some sequence of \textit{elementary row operations} (multiplying a row by a nonzero scalar, interchanging two rows, and adding a multiple of one row to another).~It should be noted that the reduced row echelon form of the matrix $N$ is unique, in the sense that reducing the matrix $N$ to the reduced row echelon form by applying any sequence of elementary row operations always yields the same matrix in the reduced row echelon form. This matrix will be denoted by $RREF(N)$.~The \textit{rank} of a matrix $N$, denoted by $\mathrm{rank}(N)$, is defined as the number of nonzero rows in $RREF(N)$. 

For matrices $M_1\in \mathbb{R}^{m\times n_1}$, $M_2\in \mathbb{R}^{m\times n_2}$, \ldots , $M_k\in \mathbb{R}^{m\times n_k}$, where $k,m,n_1,n_2,\ldots ,n_k\in \mathbb{N}$, by concatenating them from left to right we obtain a matrix $[\,M_1\mid M_2\mid\ldots \mid M_k\,]\in \mathbb{R}^{m\times n}$, where $n=n_1+n_2+\cdots+n_k$, which is called the \textit{augmented matrix} (obtained from $M_1$, $M_2$,\ldots , $M_k$).

For undefined notions and notation concerning vector spaces, vectors and matrices we refer to the book \cite{AB.23}.

\section{Three models of weighted automata}

In terms of real matrices and their properties, we will investigate three models of weighted automata with weights in the field of real numbers.

\subsection{Weighted finite automata}

Let $X$ be an alphabet.~A \textit{weighted finite automaton} over $\mathbb{R}$ and $X$ is defined~as~a tuple $\mathpzc{A}=(A,\sigma,\delta,\tau)$, where $A$ is a non-empty finite set, while $\sigma,\tau:A\to \mathbb{R}$ and $\delta:A\times X\times A\to \mathbb{R}$.~The function $\delta $ is often replaced by the family of functions $\{\delta_x\}_{x\in X}$, where $\delta_x:A\times A\to \mathbb{R}$ is given by $\delta_x(a,b)=\delta(a,x,b)$, for all $a,b\in A$ and $x\in X $. We call $A$ the \textit{set of states}, $\sigma $ the \textit{initial weights function}, $\tau $ the \textit{terminal weights function}, and $\delta $ and $\delta_x$, $x\in X$, the \textit{transition weights functions}.

The behavior of the weighted finite automaton $\mathpzc{A}$ is a word function $\beh{\mathpzc{A}}:X^*\to \mathbb{R}$ defined by
\begin{equation}\label{eq:behA}
\beh{\mathpzc{A}}(u)=\sum_{(a_0,a_1,\ldots,a_k)\in A^{k+1}} \sigma(a_0)\cdot \delta(a_0,x_1,a_1)\cdot \delta(a_1,x_2,a_2) \cdots \delta(a_{k-1},x_k,a_k)\cdot \tau (a_k), 
\end{equation}
for $u=x_1x_2\dots x_k\in X^+$, $x_1,x_2,\ldots ,x_k\in X$, and 
\begin{equation}\label{eq:behAe}
\beh{\mathpzc{A}}(\varepsilon)=\sum_{a\in A} \sigma(a)\cdot \tau (a). 
\end{equation}
We say that $\mathpzc{A}$ \textit{computes} the function $\beh{\mathpzc{A}}$.

Assume that $n$ is the number of elements of $A$, i.e., $A=\{a_1,a_2,\ldots,a_n\}$. In many situations, instead of as functions, it is more convenient to treat $\sigma $ and $\tau $ as vectors in $\mathbb{R}^n$, and $\delta_x$, $x\in X$, as $n\times n$ matrices with entries in $\mathbb{R}$. In other words, $\sigma $ will be identified with a vector in $\mathbb{R}^n$ whose $i$th coordinate~is~$\sigma(a_i)$, and $\tau $ will be identified with a vector in $\mathbb{R}^n$ whose $i$th coordinate is $\tau(a_i)$. In order to clearly distinguish between matrices and vectors, we will use capital letters of the Latin alphabet to denote matrices, while vectors will be denoted by small letters of the Greek alphabet.~Therefore, $\{M_x\}_{x\in X}$ will be a family of $n\times n$ matrices over $\mathbb{R}$ such that the $(i,j)$-entry of $M_x$~is equal to $\delta_x(a_i,a_j)$.~A weighted finite automaton $\mathpzc{A}$ is then treated as a tuple $\mathpzc{A}=(n,\sigma,\{M_x\}_{x\in X},\tau)$, where we cal $n$ the \textit{dimension}, $\sigma$ the \textit{initial weights vector}, $\tau$ the \textit{terminal weights vector}, and $M_x$, $x\in X$, the \textit{transition weights matrices}~of~$\mathpzc{A}$. We cal this tuple the \textit{linear representation} of the weighted finite automaton $\mathpzc{A}$.

When $\mathpzc{A}$ is given by the linear representation, its behavior is represented by
\begin{equation}\label{eq:behAl}
\beh{\mathpzc{A}}(u)=\sigma \cdot M_{x_1}\cdot M_{x_2}\cdots M_{x_k}\cdot \tau = \sigma\cdot M_u\cdot \tau , 
\end{equation}
for $u=x_1x_2\dots x_k\in X^+$, $x_1,x_2,\ldots ,x_k\in X$, where $M_u=M_{x_1}\cdot M_{x_2}\cdots M_{x_k}$, and 
\begin{equation}\label{eq:behAel}
\beh{\mathpzc{A}}(\varepsilon)=\sigma \cdot \tau . 
\end{equation}

In applications of automata in the theory of discrete event systems, apart from the function $\beh{\mathpzc{A}}$ computed by the automaton $\mathpzc{A}$, another function plays an important role -- the function $\beh{\mathpzc{A}}_g$ generated by the automaton $\mathpzc{A}$. For a weighted finite automaton  $\mathpzc{A}=(n,\sigma,\{M_x\}_{x\in X},\tau)$ this function can be defined with:
\begin{equation}\label{eq:genA}
    \beh{\mathpzc{A}}_g (u)=\Vert\sigma_u\Vert_{\infty} ,
\end{equation}
for every $u\in X^*$, where $\sigma_u=\sigma\cdot M_u$, for $u\in X^+$, and $\sigma_\varepsilon =\sigma $, while $\Vert\cdot\Vert_{\infty}$ denotes the \textit{maximum norm} (called also a \textit{uniform norm}) on $\mathbb{R}^n$ that is given by
\begin{equation}\label{eq:max.norm}
\Vert \alpha\Vert_{\infty}=\max_{i\in [1..n]} |\alpha_i|,    
\end{equation}
for each $\alpha=(\alpha_1,\alpha_2,\ldots ,\alpha_n)\in \mathbb{R}^n$.

Let us note that in the case of convential nondeterministic finite autom\-ata, $\beh{\mathpzc{A}}_g$ is the language consisting of all words for which a transition is defined, i.e., of all words which "lead somewhere"(cf.~\cite{CL.08}).~A nondeterministic finite automaton can be considered as a weighted finite automaton over the two-element Boolean semiring, and then $\beh{\mathpzc{A}}_g$ consists of all words $u$ for which $\sigma_u$ is a non-zero Boolean vector, i.e., for which 
\[
\max_{i\in [1..n]} |\alpha_i|=1,
\]
where $\sigma_u=(\alpha_1,\alpha_2,\ldots ,\alpha_n)\in \{0,1\}^n$. From conventional nondeterministic finite automata, such a definition was also extended to the case of fuzzy finite automata, where $\beh{\mathpzc{A}}_g(u)$ is interpreted as the degree to which the word $u$ leads somewhere.~A similar interpretation can be given here as well, when it comes to weighted finite automata over the field of real numbers, where $\beh{\mathpzc{A}}_g(u)=\Vert\sigma_u\Vert_{\infty}$~could~be interpreted as the maximal probability of the existence of a transition determined by $u$, i.e. the probability that $u$ leads somewhere.~In order for this to be consistent with the conventional view of probability, the values for $\Vert\sigma_u\Vert_{\infty}$, which are always non-negative, can be translated by some monotone function (for example, by the function $1-e^{-x})$, to values from the real unit interval $[0,1]$.

\subsection{Crisp-deterministic weighted automata}

Another model of weighted automata is a \textit{crisp-deterministic weighted automaton}, which is defined as a tuple $\mathpzc{D}=(D,d_0,\Delta ,\theta )$, where $D$ is a non-empty \textit{set of states}, $d_0\in D$ is a distinguished state that~is called the \textit{initial state}, $\Delta :D\times X\to D$ is a function called the \textit{transition function}, and $\theta :D\to \mathbb{R}$ is a function called the \textit{terminal weights function}, or the \textit{terminal weights vector}, if $\theta $ is considered as a vector in the space $\mathbb{R}^D$. Here it is not necessary that the set $D$ is finite, so we will also allow~the~possibility that $D$ is infinite.~If the set of states $D$ is finite, then $\mathpzc{D}$ is called a 
\textit{finite crisp-deterministic weighted automaton}

A finite crisp-deterministic weighted automaton can be considered as a special weighted finite automaton $\mathpzc{A}=(D,\sigma,\delta,\tau)$ in which for every $a\in D$ and $x\in X$ there exists $a'=\Delta(a,x)\in D$ such that $\delta(a,x,a')=1$, while $\delta(a,x,b)=0$, for every $b\in D\setminus\{a'\}$, and there exists $a\in D$ such that $\sigma(a)=1$, while $\sigma(b)=0$, for every $b\in D\setminus \{a\}$ (then we assume that $d_0=a$). In other words, a finite crisp-deterministic weighted automaton is a weighted finite automaton with a single initial state and a deterministic transition function, in which all weights are concentrated in the terminal weights vector.

The transition function $\Delta$ of a crisp-deterministic weighted automaton $\mathpzc{D}=(D,d_0,\Delta ,\theta )$ extends to a function $\Delta^*:D\times X^*\to D$ by putting $\Delta^*(a,\varepsilon)=a$ and $\Delta^*(a,ux)=\Delta (\Delta^*(a,u),x)$, for all $a\in D$,~$u\in X^*$ and $x\in X$, and the behavior $\beh{\mathpzc{D}}:X^*\to \mathbb{R}$ of $\mathpzc{D}$ is given by
\begin{equation}\label{eq:behD}
    \beh{\mathpzc{D}}(u)=\theta (\Delta^*(d_0,u)),
\end{equation}
for every $u\in X^*$.

From the transition function $\Delta^* :D\times X^*\to D$ we can extract a family of functions $\{\Delta_u\}_{u\in X^*}$, where $\Delta_u:D\to D$ is defined by $\Delta_u(a)=\Delta^*(a,u)$, for all $u\in X^*$ and $a\in D$. These functions will be also called \textit{transition functions}. If $u=x_1x_2\ldots x_k$, for $x_1,x_2,\ldots ,x_k\in X$, then 
\[
\Delta_u(a) =\Delta_{x_k}(\ldots (\Delta_{x_2}(\Delta_{x_1}(a)))),
\]
for every $a\in D$, that is, $\Delta_u$ is the composition $\Delta_u=\Delta_{x_1}\Delta_{x_2}\cdots \Delta_{x_k}$ of transition functions $\Delta_{x_1},\Delta_{x_2},\ldots ,\Delta_{x_k}$.

\subsection{Weighted automata over a vector space}

Let $V$ be a vector space over the field $\mathbb{R}$ of real numbers.~The third model of weighted automata~is~a \textit{weighted automaton over a vector space} (\textit{with vector states}), defined as a tuple $\mathpzc{A}=(S,\sigma,\delta,\Theta)$, where $S\subseteq V$~is~a~non\-empty set of vectors, called the  \textit{set of vector states}, $\sigma\in S$ is a vector called the \textit{initial vector state}, $\delta :S\times X\to S$ is a deterministic \textit{transition function}, and $\Theta:S\to \mathbb{R}$ is a function called the \textit{terminal weights function}. Here we also allow $S$ to be infinite.~Furthermore, in some sources $S$ is taken to be the entire space $V$, but here we allow $S$ to be only a subset of the space $V$ to enable it to be finite.~If the set $S$ of vector states is finite, then $\mathpzc{A}$ is called a \textit{finite weighted automaton over a vector space}, and if $V$ is a finite-dimensional vector space of dimension $n$, then $\mathpzc{A}$ is also said to be a \textit{finite-dimensional weighted automaton over a vector space of dimension $n$}.~If the cardinality of the set of states of $\mathpzc{A}$ is less than or equal to the cardinality of the set of states of any other weighted automaton over the vector space $V$, that we say that $\mathpzc{A}$ is a \textit{minimal weighted automaton over the vector space} $V$. 

As with crisp-deterministic weighted automata,  $\delta $ can be extended to a function $\delta^*:S\times X^*\to S$~by $\delta^*(\alpha,\varepsilon)=\alpha$ and $\delta^*(\alpha,ux)=\delta(\delta^*(\alpha,u),x)$, for all $\alpha \in S
$, $u\in X^*, x\in X$, and the function $\delta^*$ determines a family of functions $\{\delta_u\}_{u\in X^*}$, where $\delta_u:S\to S$~is defined by $\delta_u(\alpha)=\delta^*(\alpha,u)$, for all $u\in X^*$ and $\alpha\in S$. These functions are also called \textit{transition functions}. If $u=x_1x_2\ldots x_k$, for $x_1,x_2,\ldots ,x_k\in X$, then 
\[
\delta_u(\alpha ) =\delta_{x_k}(\ldots (\delta_{x_2}(\delta_{x_1}(\alpha )))),
\]
for every $\alpha \in V$, that is, $\delta_u$ is the composition $\delta_u=\delta_{x_1}\delta_{x_2}\cdots \delta_{x_k}$ of transition functions $\delta_{x_1},\delta_{x_2},\ldots ,\delta_{x_k}$. Then $\mathpzc{A}$ can be equivalently represented as a tuple $\mathpzc{A}=(S,\sigma,\{\delta_x\}_{x\in X},\Theta )$.~In addition, for every $u\in X^*$, with $\sigma_u$ we denote a vector from $S$ given by $\sigma_u=\delta^*(\sigma,u)=\delta_u(\sigma)$.  

The behavior $\beh{\mathpzc{A}}:X^*\to \mathbb{R}$ of the weighted automaton $\mathpzc{A}$ over a vector space $V$ is given by
\begin{equation}\label{eq:behAvs}
    \beh{\mathpzc{A}}(u)=\Theta (\delta^*(\sigma,u)),
\end{equation}
for every $u\in X^*$.~On the other hand, the function $\beh{\mathpzc{A}}_g$ generated by $\mathpzc{A}$ is defined with
\begin{equation}\label{eq:genAvs}
    \beh{\mathpzc{A}}_g (u)=\Vert \delta^*(\sigma,u)\Vert=\Vert \delta_u(\sigma)\Vert,
\end{equation}
for each $u\in X^*$, where $\Vert\cdot\Vert$ denotes some norm on the vector space $V$. If $V$ is a finite-dimensional space we will assume that $\Vert\cdot\Vert$ is the maximum norm $\Vert\cdot\Vert_{\infty}$ (see \eqref{eq:max.norm}), and if $V=\mathbb{R}^T$ is some function space (later we will consider the function space $\mathbb{R}^{X^*}$), then we will assume that $\Vert\cdot\Vert$ is the \textit{supremum norm} $\Vert\cdot\Vert_{\infty}$ (also called the \textit{uniform norm}) that is given by
\[
\Vert f\Vert_{\infty}=\sup_{t\in T} |f(t)|,
\]
for every $f:T\to \mathbb{R}$ (clearly, for $T=[1..n]$ we obtain \eqref{eq:max.norm}, i.e., the maximum norm).

\section{Relationships between different types of weighted automata}

Let $\mathpzc{A}$ and $\mathpzc{B}$ be weighted automata of any of the three types discussed in the previous section, where they can be of different types. If $\beh{\mathpzc{A}}=\beh{\mathpzc{B}}$, then $\mathpzc{A}$ and $\mathpzc{B}$ are said to be \textit{language-equivalent}.~On the other hand, if each of these automata is a weighted finite automaton or a weighted automaton over a vector space, where they do not have to be of the same type, and if $\beh{\mathpzc{A}}=\beh{\mathpzc{B}}$ and $\beh{\mathpzc{A}}_g=\beh{\mathpzc{B}}_g$, then $\mathpzc{A}$ and $\mathpzc{B}$ are said to be \textit{completely language-equivalent}.

\begin{theorem}
Let $\mathpzc{D} = (D, d_0, \Delta, \theta )$ be a crisp-deterministic weighted automaton over\, $\mathbb{R}$, let $V$ be a vector space and let $S\subseteq V$ be a set of vectors which has the same cardinality as $D$.

Then $\mathpzc{D}$ can be turned into a language-equivalent weighted automaton over the vector space $V$ having $S$ as its set of states.   
\end{theorem}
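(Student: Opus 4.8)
The plan is to transport the automaton structure of $\mathpzc{D}$ across a bijection onto $S$. Since $S$ and $D$ have the same cardinality, fix a bijection $\varphi : D \to S$; in particular set $\sigma = \varphi(d_0)$. The idea is to use $\varphi$ to "rename" the states of $\mathpzc{D}$ as vectors in $S$, carrying along the transition function and the terminal weights unchanged.

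Concretely, I would define a weighted automaton $\mathpzc{A} = (S, \sigma, \delta, \Theta)$ over the vector space $V$ by setting $\delta(\alpha, x) = \varphi\bigl(\Delta(\varphi^{-1}(\alpha), x)\bigr)$ for all $\alpha \in S$ and $x \in X$, and $\Theta(\alpha) = \theta(\varphi^{-1}(\alpha))$ for all $\alpha \in S$. These are well-defined because $\varphi$ is a bijection, so $\varphi^{-1}$ is a genuine function $S \to D$, and $\delta$ indeed maps into $S$ since $\Delta$ maps into $D$. Note that no linearity or structural compatibility with the vector space operations on $V$ is required here — the definition of a weighted automaton over a vector space only asks that $S$ be a nonempty subset of $V$ and that $\delta$ be an arbitrary deterministic transition function $S \times X \to S$ — so the construction goes through for any choice of $S$.

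The core step is to verify language-equivalence, i.e. $\beh{\mathpzc{A}} = \beh{\mathpzc{D}}$. First I would show by induction on $|u|$ that $\delta^*(\sigma, u) = \varphi\bigl(\Delta^*(d_0, u)\bigr)$ for all $u \in X^*$: the base case $u = \varepsilon$ holds since $\delta^*(\sigma, \varepsilon) = \sigma = \varphi(d_0) = \varphi(\Delta^*(d_0, \varepsilon))$, and the inductive step for $u = vx$ follows from the definition of $\delta$ together with the recursive defining equations $\delta^*(\sigma, vx) = \delta(\delta^*(\sigma, v), x)$ and $\Delta^*(d_0, vx) = \Delta(\Delta^*(d_0, v), x)$. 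Applying $\Theta$ to both sides then gives $\beh{\mathpzc{A}}(u) = \Theta(\delta^*(\sigma, u)) = \theta\bigl(\varphi^{-1}(\varphi(\Delta^*(d_0, u)))\bigr) = \theta(\Delta^*(d_0, u)) = \beh{\mathpzc{D}}(u)$ by \eqref{eq:behAvs} and \eqref{eq:behD}, as desired.

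There is no real obstacle here: the statement is essentially a renaming argument, and the only thing to be careful about is that the definition of a weighted automaton over a vector space does not impose linearity of $\delta$ on $S$, so the transported transition function is admissible regardless of how $S$ sits inside $V$. If one wanted to, one could additionally observe that $\mathpzc{A}$ and $\mathpzc{D}$ are isomorphic as crisp-deterministic weighted automata (via $\varphi$), but stating and proving language-equivalence is all that the theorem requires.
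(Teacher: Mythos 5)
Your proposal is correct and follows essentially the same route as the paper: fix a bijection between $D$ and $S$ (the paper writes it as $\phi: S \to D$, you write its inverse), transport $\Delta$ and $\theta$ along it, and verify $\beh{\mathpzc{A}} = \beh{\mathpzc{D}}$ via the identity $\delta^*(\sigma,u) = \varphi(\Delta^*(d_0,u))$, which the paper asserts as clear and you justify by an explicit induction.
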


\begin{proof}
Let $\phi : S\to D$ be an arbitrary bijective function between $S$ and $D$. Then we can define an initial vector state $\sigma\in S$ by 
\[
\sigma=\phi^{-1}(d_0).
\] 
Let $\{\delta_x\}_{x\in X}$ be a family of transition functions defined in the following way: For each $x\in X$, 
\[
\delta_x(\alpha)=\phi^{-1}(\Delta(\phi(\alpha),x)),\quad\text{for every } \alpha\in S.  
\]
Let the terminal weights function $\Theta: S\to \mathbb{R}$ be defined by 
\[
\Theta(\alpha)=\theta(\phi(\alpha)),\quad\text{for every } \alpha\in S.  
\] 
 Then $\mathpzc{A}=(S,\sigma, \{\delta_x\}_{x\in X}, \Theta)$ is a weighted automaton over the vector space $V$ having $S$ as its set~of~states. 

Clearly, for each $\alpha\in S$ and $u\in X^*$ we have 
\[
\delta_u(\alpha)=\phi^{-1}(\Delta(\phi(\alpha),u).
\]
Furthermore, 
for every $u\in X^*$ the following holds
\begin{align*}
\beh{\mathpzc{A}}(u)&= \Theta(\delta^*(\sigma, u))=\Theta(\delta_u(\sigma))=
\theta (\phi(\delta_u(\sigma)))=\theta (\phi(\phi^{-1}(\Delta(\phi(\sigma),u))))=\\
&=
\theta (\phi(\phi^{-1}(\Delta(\phi(\phi^{-1} (d_0)),u))))=\theta (\Delta(d_0,u))= \beh{\mathpzc{D}}(u)  
\end{align*}
and therefore, $\mathpzc{A}$ and $\mathpzc{D}$ are  language-equivalent.
\end{proof}

T. Li, G. Rabusseau and D. Precup in \cite{LRP.18} defined a \textit{nonlinear weighted finite automaton} over the field of real numbers $\mathbb R$ as a tuple $(\sigma,\{\delta_x\}_{x\in X},\Theta )$, where $\sigma\in \mathbb{R}^n$ is a vector of initial weights, $\{\delta_x\}_{x\in X}$ is a family of transformations such that $\delta_x:\mathbb{R}^n\to \mathbb{R}^n$, for each $x\in X$, which are called transition functions, and $\theta :\mathbb{R}^n\to \mathbb{R}$ is a termination function.~This definition is almost identical to our definition of a weighted automaton over a vector space.~The difference is that the set $S$ of vector states is not explicitly stated in the mentioned paper, but we can assume that $S$ is the smallest set of vectors from $\mathbb{R}^n$ that contains $\sigma $ and is closed for all transformations from the family $\{\delta_x\}_{x\in X}$, that is, $S=\{\sigma_u\}_{u\in X^*}$, where $\sigma_u=\delta^*(\sigma,u)=\delta_u(\sigma)$, for each $u\in X^*$.~Another difference is that the transformations $\{\delta_x\}_{x\in X}$ are defined on $\mathbb{R}^n$, but nothing changes significantly if we replace them with their restrictions on $S$.~The third difference is that Lee, Rabusseau and Precup considered automata over the finite-dimensional vector space $\mathbb{R}^n$, while in our definition we provide the possibility that the underlying vector space $V$ be also infinite-dimensional. 
Therefore, the concepts of a nonlinear weighted finite automaton and a weighted automaton over a vector space are almost the same.~Let us note that the adjective "finite" in the name of nonlinear weighted finite automata does not refer to the finiteness of the set $S$ of vector states, as in our definition, but to the finite dimension of the space $\mathbb{R}^n$.~In addition, regardless of the adjective nonlinear in the name, in the nonlinear weighted finite automaton among the transformations $\delta_x$, $x\in X$, there can be both linear and nonlinear ones. 

Here, a weighted automaton $\mathpzc{A}=(S,\sigma,\{\delta_x\}_{x\in X},\Theta )$ over a vector space $V$ is defined to be \textit{linear} if for any $x\in X$ the function $\delta_x$ is the restriction of some linear operator $\delta_{\!\!x}':\mathrm{span}(S)\to \mathrm{span}(S)$~to~the~set $S$, and also, $\Theta $ is the restriction of some linear functional (linear form)  $\Theta':\mathrm{span}(S)\to \mathbb{R}$ to the set~$S$,~i.e.,%
\begin{equation}\label{eq:lin.waovs}
\delta_{\!\!x}'(s\alpha+t\beta)=s\delta_{\!\!x}'(\alpha)+t\delta_{\!\!x}'(\beta ), \qquad 
    \Theta' (s\alpha+t\beta)=s\Theta'(\alpha)+t\Theta'(\beta ),    
\end{equation}
for all $x\in X$, $\alpha,\beta \in \mathrm{span}(S)$ and $s,t\in \mathbb{R}$.~Otherwise, if some of the mappings $\delta_x$, $x\in X$, and $\Theta $ can not be~represented in this way, then $\mathpzc{A}$ is said to be a \textit{nonlinear weighted automaton over a vector space}.~If $V\subseteq \mathbb{R}^n$, for some $n\in \mathbb N$, then $\mathpzc{A}$ is linear if and only if for each $x\in X$ there is a matrix $M_x\in \mathbb{R}^{n\times n}$ such that $\delta_x(\alpha )=\alpha\cdot M_x$, for each $\alpha \in S$, and there is also a vector $\tau\in \mathbb{R}^n$ such that $\Theta(\alpha)=\alpha\cdot \tau $, for each $\alpha\in S$ (here $\alpha\cdot \tau $ denotes the scalar product of $\alpha $~and~$\tau$). 

Let us note that our linear weighted autom\-ata over a vector space are almost identical to autom\-ata studied by Balle, Gourdeau and Panangaden in \cite{BGP.22} (which are called there only weighted finite~automata), the only difference is that there $V$ was assumed to be a finite-dimensional space and~$S=V$. However, even this small difference concerning the set of vector states can be significant.~Namely,~let $\mathpzc{A}=(S,\sigma,\{\delta_x\}_{x\in X},\Theta )$ be any linear 
weighted automaton over the vector space $V=\mathbb{R}^n$ with the set of vector states $S\subset V$ such that $\mathrm{span}(S)\ne V$, for each $x\in X$ let $\delta_x$ be the restriction of some linear operator $\delta_{\!\!x}':\mathrm{span}(S)\to \mathrm{span}(S)$ to the set $S$, and let $\Theta $ be the restriction of some linear functional   $\Theta':\mathrm{span}(S)\to \mathbb{R}$ to the set~$S$. Then we can~easily~extend any $\delta_{\!\!x}'$ to an operator $\delta_{\!\!x}'':V\to V$ which is not linear, for example, by taking  $\delta_{\!\!x}''$ to concide with $\delta_{\!\!x}'$ on $\mathrm{span}(S)$ and  
\[
\delta_{\!\!x}''\bigl(\begin{bmatrix} s_1 & s_2 & \ldots & s_n\end{bmatrix}\bigr)=\begin{bmatrix} s_1^2 & s_2^2 & \ldots & s_n^2\end{bmatrix},
\]
for every $\begin{bmatrix} s_1 & s_2 & \ldots & s_n\end{bmatrix}\in V\setminus \mathrm{span}(S)$, and we can extend $\Theta'$ to a non-linear function $\Theta'':V\to \mathbb{R}$. Therefore, $\mathpzc{A}''=(V,\sigma,\{\delta_{\!\!x}''\}_{x\in X},\Theta'')$ is a non-linear weighted automaton over the vector space $V$ with the set of vector states $V$, but if we assume that the set of vector states is $S$, then $\mathpzc{A}''$ becomes linear.

The following theorem explains the connection between finite-dimensional linear weighted automata over a vector space and weighted finite automata.

\begin{theorem}\label{th:lwavs.wfa}
    Every finite-dimensional linear weighted automaton over a vector space can be turned into a completely language-equivalent weighted finite automaton.
    
    Conversely, every weighted finite automaton can be turned into a completely language-equivalent finite-dimensional linear weighted automaton over a vector space.
\end{theorem}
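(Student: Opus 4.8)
The plan is to give, in each direction, an explicit construction that trades a linear action on coordinate vectors for multiplication by transition matrices, and then to verify that both the computed word function $\beh{\cdot}$ and the generated word function $\beh{\cdot}_g$ are preserved. For the first direction, let $\mathpzc{A}=(S,\sigma,\{\delta_x\}_{x\in X},\Theta)$ be a finite-dimensional linear weighted automaton over a vector space $V$ of dimension $n$. Fixing an isomorphism $V\to\mathbb{R}^n$ and taking the maximum norm on $V$ to be the one pulled back along it, we may assume $V=\mathbb{R}^n$ and $S\subseteq\mathbb{R}^n$. By the characterization of linearity for automata over subsets of $\mathbb{R}^n$ recalled just before the statement, there are matrices $M_x\in\mathbb{R}^{n\times n}$ with $\delta_x(\alpha)=\alpha\cdot M_x$ for all $\alpha\in S$ and a vector $\tau\in\mathbb{R}^n$ with $\Theta(\alpha)=\alpha\cdot\tau$ for all $\alpha\in S$, and I would put $\mathpzc{B}=(n,\sigma,\{M_x\}_{x\in X},\tau)$.

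The single identity doing the real work is $\delta^*(\sigma,u)=\sigma\cdot M_u$ for every $u\in X^+$ (and $\delta^*(\sigma,\varepsilon)=\sigma$), which I would prove by induction on $|u|$: the base case is immediate, and the inductive step uses that $S$ is closed under each $\delta_x$, so that $\delta^*(\sigma,ux)=\delta_x(\delta^*(\sigma,u))=\delta^*(\sigma,u)\cdot M_x=\sigma\cdot M_u\cdot M_x=\sigma\cdot M_{ux}$. From this, $\beh{\mathpzc{A}}(u)=\Theta(\delta^*(\sigma,u))=\delta^*(\sigma,u)\cdot\tau=\sigma\cdot M_u\cdot\tau=\beh{\mathpzc{B}}(u)$ for $u\in X^+$, while $\beh{\mathpzc{A}}(\varepsilon)=\sigma\cdot\tau=\beh{\mathpzc{B}}(\varepsilon)$; and $\beh{\mathpzc{A}}_g(u)=\Vert\delta^*(\sigma,u)\Vert_\infty=\Vert\sigma\cdot M_u\Vert_\infty=\beh{\mathpzc{B}}_g(u)$ for $u\in X^+$, while $\beh{\mathpzc{A}}_g(\varepsilon)=\Vert\sigma\Vert_\infty=\beh{\mathpzc{B}}_g(\varepsilon)$. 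Hence $\mathpzc{A}$ and $\mathpzc{B}$ are completely language-equivalent.

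For the converse, given a weighted finite automaton $\mathpzc{A}=(n,\sigma,\{M_x\}_{x\in X},\tau)$ I would take its Nerode automaton: let $V=\mathbb{R}^n$, let the set of vector states be $S=\{\sigma_u\mid u\in X^*\}$ with $\sigma_\varepsilon=\sigma$ and $\sigma_u=\sigma\cdot M_u$ for $u\in X^+$, let $\sigma$ be the initial vector state, and set $\delta_x(\alpha)=\alpha\cdot M_x$ and $\Theta(\alpha)=\alpha\cdot\tau$ for $\alpha\in S$. One checks this is well defined: $\sigma=\sigma_\varepsilon\in S$, and $\delta_x(\sigma_u)=\sigma_u\cdot M_x=\sigma\cdot M_{ux}=\sigma_{ux}\in S$, so $S$ is closed under each $\delta_x$. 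This automaton $\mathpzc{B}$ is finite-dimensional since $\dim V=n$, and it is linear since $\delta_x$ and $\Theta$ are the restrictions to $S$ of the linear operator $\alpha\mapsto\alpha\cdot M_x$ and the linear functional $\alpha\mapsto\alpha\cdot\tau$, so the cited characterization applies. The same unfolding induction gives $\delta^*(\sigma,u)=\sigma_u$ for every $u\in X^*$, whence $\beh{\mathpzc{B}}(u)=\Theta(\sigma_u)=\sigma_u\cdot\tau=\sigma\cdot M_u\cdot\tau=\beh{\mathpzc{A}}(u)$ and $\beh{\mathpzc{B}}_g(u)=\Vert\sigma_u\Vert_\infty=\beh{\mathpzc{A}}_g(u)$ for all $u\in X^*$, so $\mathpzc{B}$ and $\mathpzc{A}$ are completely language-equivalent.

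I do not expect a genuine obstacle here: the whole argument reduces to the single unfolding identity $\delta^*(\sigma,u)=\sigma\cdot M_u$ together with definition chasing. The two points that require attention are bookkeeping ones. First, in the forward direction the equality of \emph{generated} functions relies on reading the maximum norm on the abstract space $V$ through the same isomorphism $V\cong\mathbb{R}^n$ that furnishes the matrices $M_x$; a non-isometric identification would still preserve $\beh{\cdot}$ but could destroy the equality of $\beh{\cdot}_g$, hence of complete language-equivalence. Second, in the converse direction one must keep in mind that finite-dimensionality here constrains $\dim V$, not $|S|$: the state set $S=\{\sigma_u\mid u\in X^*\}$ of the Nerode automaton may well be infinite, and in general it cannot be taken to be a \emph{finite} weighted automaton over a vector space. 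Everything else --- the empty-word cases, and the harmless extension of the relevant linear maps from $\mathrm{span}(S)$ to all of $\mathbb{R}^n$ --- is routine or already folded into the cited characterization of linearity.
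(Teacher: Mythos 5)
Your proposal is correct and follows essentially the same route as the paper: in the forward direction you extract the matrices $M_x$ and the vector $\tau$ from linearity and form the linear representation $(n,\sigma,\{M_x\}_{x\in X},\tau)$, and in the converse direction you use exactly the Nerode automaton, verifying both $\beh{\cdot}$ and $\beh{\cdot}_g$ as the paper does. Your extra bookkeeping --- the explicit induction establishing $\delta^*(\sigma,u)=\sigma\cdot M_u$, the explicit check that the Nerode automaton is linear and well defined, and the remark that the identification $V\cong\mathbb{R}^n$ must be read compatibly with the maximum norm for $\beh{\cdot}_g$ --- only makes explicit points the paper leaves implicit.
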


\begin{proof}
    Let  $\mathpzc{A}=(S,\sigma,\delta,\Theta)$ be a finite-dimensional linear weighted automaton over a vector space $V$ of dimension $n$.~Since $\mathpzc{A}$ is linear, we have that for each $x\in X$ there exists a matrix $M_x\in \mathbb{R}^{n\times n}$ such that $\sigma\cdot M_x=\delta(\sigma, x)$, and moreover, there exists a vector $\tau \in \mathbb R^n$ such that $\Theta (\sigma)=\sigma\cdot \tau$. In this way, we obtain a weighted finite automaton $\mathpzc{A}'$ which is given by the linear representation  $\mathpzc{A}'=(n, \sigma, \{M_x\}_{x\in X}, \tau)$. 
    
    Now, for every $u\in X^+$ such that $u=x_1x_2\cdots x_k$, where $x_1,x_2,\dots, x_k\in X$, we have 
    \begin{align*}
        \beh{\mathpzc{A}'}(u)=\sigma\cdot M_{x_1}\cdot M_{x_2}\cdot\ \cdots\ \cdot M_{x_k}\cdot \tau = \sigma\cdot M_u\cdot \tau=\delta^*(\sigma, u)\cdot \tau=\Theta (\delta^*(\sigma, u))=\beh{\mathpzc{A}}(u),
    \end{align*}
and
\begin{align*}
        \beh{\mathpzc{A}'}(\varepsilon)=\sigma\cdot\tau = \delta^*(\sigma, \varepsilon)\cdot \tau=\Theta (\delta^*(\sigma, \varepsilon))=\beh{\mathpzc{A}}(\varepsilon).
    \end{align*}
 Finally, we have that
 \[
 \beh{\mathpzc{A}'}_g(u)=\Vert \sigma_u\Vert_{\infty} = \Vert \sigma\cdot M_u\Vert_{\infty}=\Vert\delta^*(\sigma, u)\Vert_{\infty}=\beh{\mathpzc{A}}_g(u),
 \]
for every $u\in X^*$. 

Therefore, we have proved that the finite-dimensional linear weighted automaton $\mathpzc{A}$ is completely language-equivalent to the weighted finite automaton $\mathpzc{A}'$.

Conversely, let $\mathpzc{A}=(A,\sigma,\delta,\tau)$ be a weighted finite automaton with $n$ states.~We define a weighted automaton $\mathpzc{A}_N=(S_N,\sigma^N,\delta^N,\Theta^N)$ over the vector space $\mathbb{R}^n$  in the following way: we set
\[
{S}_N=\{\sigma_u\, |\, u\in X^* \}, \quad \sigma^N=\sigma ,
\]
and we define functions $\delta^N:S_N\times X\to S_N$ and $\Theta^N:S_N\to \mathbb R$ by
\[
{\delta}^N(\sigma_u,x)=\sigma_u\cdot \delta_x=\sigma_{ux},\qquad {\Theta}^N(\sigma_u)=\sigma_u \cdot \tau,
\]
for every $u\in X^*$.~The automaton $\mathpzc{A}_N$ is obviously well-defined and is called in \cite{ICBP.10} the \textit{Nerode automaton} of the automaton $\mathpzc{A}$.
It remains to prove that the Nerode automaton $\mathpzc{A}_N$ is completely language-equivalent to the original weighted finite automaton $\mathpzc{A}$.

For an arbitrary word $u\in X^*$ we have that
\begin{align*}
\beh{\mathpzc{A}_N}(u)= \Theta^N((\delta^N)^*(\sigma, u))=\Theta^N (\sigma_u)=\sigma_u\cdot \tau = \beh{\mathpzc{A}}(u).
    \end{align*}
and also, 
\begin{align*}
        \beh{\mathpzc{A}_N}_g(u)=
        \Vert (\delta^N)^*(\sigma, u)\Vert_{\infty}
    =\Vert\sigma_u\Vert_{\infty}=\beh{\mathpzc{A}}_g(u).
    \end{align*}
    Therefore, the Nerode automaton $\mathpzc{A}_N$ of $\mathpzc{A}$ is completely language-equivalent to $\mathpzc{A}$.
\end{proof}

Let $\mathpzc{A}=(S,\sigma,\{\delta_x\}_{x\in X},\Theta )$ be a finite weighted automaton over the vector space $V=\mathbb{R}^n$, and let us assume that $S=\{\alpha_1,\alpha_2,\ldots ,\alpha_m\}$ and $X=\{x_1,x_2,\ldots ,x_s\}$. Let us form $m\times n$-matrices $N$ and $N_{x_i}$, $i\in [1..s]$,  and a column vector ($1\times m$-matrix) $\vartheta$ such that rows in $N$ are $\alpha_1, \alpha_2, \ldots ,\alpha_m$, rows in $N_{x_i}$ are $\delta_{x_i}(\alpha_1), \delta_{x_i}(\alpha_2), \ldots $, $\delta_{x_i}(\alpha_m)$, and entries in $\vartheta$ are $\Theta(\alpha_1), \Theta(\alpha_2), \ldots ,\Theta(\alpha_m)$, in that order, i.e.
\[
N=\begin{bmatrix} \alpha_1 \\ \alpha_2 \\ \vdots \\ \alpha_m \end{bmatrix}, \quad
N_{x_i}= \begin{bmatrix}\delta_{x_i}(\alpha_1)\\ \delta_{x_i}(\alpha_2)\\ \vdots \\ \delta_{x_i}(\alpha_m) \end{bmatrix}, \ \ i\in [1..s], \quad 
\vartheta= \begin{bmatrix}\Theta(\alpha_1)\\ \Theta(\alpha_2)\\ \vdots \\ \Theta(\alpha_m) \end{bmatrix}
\]
We will call $N$ the \textit{state matrix} and $\vartheta $ the \textit{terminal vector}, while for each $i\in [1..s]$, we call $N_{x_i}$~the~\textit{destination matrix} corresponding to the input letter $x_i$.

The following theorem provides a procedure for testing the linearity of a finite weighted automaton over a finite-dimensional vector space.

\begin{theorem}\label{th:lin}
Let $\mathpzc{A}=(S,\sigma,\{\delta_x\}_{x\in X},\Theta )$ be a finite weighted automaton over the vector space $V\subseteq\mathbb{R}^n$. Then $\mathpzc{A}$ is linear if and only if the matrix $N$ has the same rank as the augmented matrix 
\[
[\,N\mid N_{x_1}\mid N_{x_2}\mid\ldots \mid N_{x_s}\mid \vartheta \,].
\]
\end{theorem}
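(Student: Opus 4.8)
The plan is to reduce the abstract linearity condition to the solvability of a family of matrix equations, and then to detect that solvability by comparing ranks. Recall that, because $V\subseteq\mathbb{R}^n$, linearity of $\mathpzc{A}$ means exactly that there exist matrices $M_{x_i}\in\mathbb{R}^{n\times n}$, $i\in[1..s]$, with $\delta_{x_i}(\alpha)=\alpha\cdot M_{x_i}$ for every $\alpha\in S$, and a column vector $\tau\in\mathbb{R}^n$ with $\Theta(\alpha)=\alpha\cdot\tau$ for every $\alpha\in S$. The first step is to rewrite these conditions row by row over $S=\{\alpha_1,\ldots,\alpha_m\}$: since the $j$th row of $N\cdot M_{x_i}$ equals $\alpha_j\cdot M_{x_i}$ and the $j$th entry of $N\cdot\tau$ equals $\alpha_j\cdot\tau$, the requirement that $\delta_{x_i}(\alpha)=\alpha\cdot M_{x_i}$ for all $\alpha\in S$ is equivalent to the matrix equation $N\cdot M_{x_i}=N_{x_i}$, and the requirement that $\Theta(\alpha)=\alpha\cdot\tau$ for all $\alpha\in S$ is equivalent to $N\cdot\tau=\vartheta$.

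It follows that $\mathpzc{A}$ is linear if and only if the matrix equation $N\cdot Z=N_{x_i}$ in the unknown $Z\in\mathbb{R}^{n\times n}$ is solvable for every $i\in[1..s]$ and the linear system $N\cdot z=\vartheta$ in the unknown column vector $z\in\mathbb{R}^n$ is solvable; as these equations involve pairwise disjoint sets of unknowns, their simultaneous solvability is simply the conjunction of the individual solvabilities. The next step is to invoke the classical solvability criterion for linear systems in matrix form: for any $B\in\mathbb{R}^{m\times p}$, the equation $N\cdot Z=B$ has a solution if and only if every column of $B$ lies in the span of the columns of $N$, which in turn holds if and only if $\mathrm{rank}([\,N\mid B\,])=\mathrm{rank}(N)$. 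This is just the column-wise reading of the usual criterion: $N\cdot z=b$ is solvable iff $b$ lies in the span of the columns of $N$; the span of the columns of $[\,N\mid B\,]$ always contains the span of the columns of $N$, so the two coincide iff every column of $B$ already lies in the latter, and coincidence of these subspaces is detected by equality of the ranks of the two matrices.

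Finally, applying this criterion with $B=[\,N_{x_1}\mid N_{x_2}\mid\ldots\mid N_{x_s}\mid\vartheta\,]$ shows that the simultaneous solvability of all the equations $N\cdot Z=N_{x_i}$ and of $N\cdot z=\vartheta$ is equivalent to every column of every $N_{x_i}$, as well as the single column $\vartheta$, lying in the span of the columns of $N$, that is, to $\mathrm{rank}([\,N\mid N_{x_1}\mid N_{x_2}\mid\ldots\mid N_{x_s}\mid\vartheta\,])=\mathrm{rank}(N)$. Chaining these equivalences gives the statement. I do not anticipate a genuine obstacle; the argument is essentially bookkeeping, and the only points that demand care are keeping the row/column conventions straight while passing between the functional condition $\delta_{x_i}(\alpha)=\alpha\cdot M_{x_i}$ on $S$ and the matrix equation $N\cdot M_{x_i}=N_{x_i}$, and noting that a single matrix $M_{x_i}$ (respectively a single vector $\tau$) must serve all vector states at once, which is exactly what the equation $N\cdot Z=N_{x_i}$ (respectively $N\cdot z=\vartheta$) encodes.
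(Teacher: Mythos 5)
Your proposal is correct and follows essentially the same route as the paper: both reduce linearity to the simultaneous solvability of the matrix equations $N\cdot X_i=N_{x_i}$ ($i\in[1..s]$) and $N\cdot\chi=\vartheta$ via the row rule for matrix multiplication, and then detect solvability by comparing $\mathrm{rank}(N)$ with the rank of the augmented matrix. The only difference is cosmetic: you justify the rank criterion by the column-space (Kronecker--Capelli) argument, whereas the paper verifies it explicitly through the reduced row echelon form of the augmented matrix and its zero rows.
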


\begin{proof}
First we prove that $\mathpzc{A}$ is linear if and only if each of the following equations is solvable
\begin{equation}\label{eq:matr.eq}
N\cdot X_1=N_{x_1}, \ \ N\cdot X_2=N_{x_2}, \ \ \ldots ,\ \ N\cdot X_s=N_{x_s}, \ \ 
N\cdot \chi = \vartheta ,
\end{equation}
where $X_1,X_2,\ldots ,X_n$ are unknown $n\times n$-matrices, and $\chi $ is an unknown vector taking values in $\mathbb{R}^n$.

Assume that $\mathpzc{A}$ is linear, i.e., that there are matrices $M_{x_i}\in \mathbb{R}^{n\times n}$, for each $i\in [1..s]$, and a vector $\tau\in \mathbb{R}^n$ such that 
\begin{equation}\label{eq:linearity}
    \delta_{x_i}(\alpha_j)=\alpha_j\cdot M_{x_i},\ \ \text{for all}\ i\in [1..n],\ j\in [1..m],\qquad \Theta(\alpha_j)=\alpha_j\cdot \tau ,\ \ \text{for each}\ j\in [1..m].
\end{equation} 
According to the well-known row rule for matrix multiplication we obtain that
\begin{equation}\label{eq:matr.eq.sol}
N\cdot M_{x_i}=
\begin{bmatrix}
    \alpha_1\cdot M_{x_i} \\ \alpha_2\cdot M_{x_i} \\ \cdots \\ \alpha_m\cdot M_{x_i}
\end{bmatrix}=
\begin{bmatrix}
    \delta_{x_i}(\alpha_1) \\ \delta_{x_i}(\alpha_2) \\ \cdots \\ \delta_{x_i}(\alpha_m)
\end{bmatrix}=N_{x_i},\qquad
N\cdot \tau=
\begin{bmatrix}
    \alpha_1\cdot \tau \\ \alpha_2\cdot \tau \\ \cdots \\ \alpha_m\cdot \tau
\end{bmatrix}=
\begin{bmatrix}
    \Theta(\alpha_1) \\ \Theta(\alpha_2) \\ \cdots \\ \Theta(\alpha_m)
\end{bmatrix}=\vartheta ,    
\end{equation}
for every $i\in [1..s]$, and we conclude that $X_1=M_{x_1}$, $X_2=M_{x_2}$, \ldots , $X_s=M_{x_s}$ and $\chi=\tau$ are solutions of equations from \eqref{eq:matr.eq}.

Conversely, let any of the equations from \eqref{eq:matr.eq} is solvable, and assume that $X_i=M_{x_i}$, for $i\in [1..s]$, and $\chi=\tau $, are solutions to these equations.~From there we obtain that \eqref{eq:matr.eq.sol} holds, which further implies that \eqref{eq:linearity} holds, and therefore, $\mathpzc{A}$ is a linear weighted automaton over a vector space $V$.

Next we prove that each of the equations from \eqref{eq:matr.eq} is solvable if and only if the rank of $N$ is equal to the rank of the augmented matrix $N^*=[\,N\mid N_{x_1}\mid N_{x_2}\mid\ldots \mid N_{x_s}\mid \vartheta \,]$.~Let $R$ be the reduced row echelon form of the augmented matrix $N^*$, and let $R$ be partitioned as follows:
\[
R=[\,R_0\,|\,R_1\,|\,R_2\,|\,\ldots\,|\,R_s\,|\,\nu\,]
\]
where $R_0,R_1,R_2,\ldots ,R_s$ are $m\times n$-matrices and $\nu $ is a column vector of dimension $m$.~Then $R_0$ is the reduced row echelon form of $N$, for each $i\in [1..s]$, $[\,R_0\,|\,R_i\,]$ is the reduced row echelon form of $[\,N\,|\,N_{x_i}\,]$,  and $[\,R_0\,|\,\nu\,]$ is the reduced row echelon form of $[\,N\,|\,\vartheta\,]$.

For an arbitrary $i\in [1..s]$ we have that the equation $N\cdot X_i=N_{x_i}$ is solvable if and only the equation $N\cdot c_k({X_i})=c_k({N_{x_i}})$ is solvable for every $k\in [1..n]$.~On the other hand, for every $k\in [1..n]$ we have that 
$N\cdot c_k({X_i})=c_k({N_{x_i}})$ is solvable if and only if $\mathrm{rank}(N)=\mathrm{rank}([\,N\,|\,c_k({N_{x_i}})])$, and this holds if and only if for every $j\in [1..m]$ for which the $j$th row of $R_0$ is the zero vector it follows that the $j$th coordinate of $c_k({N_{x_i}})$ is equal to zero. Similarly, the equation $N\cdot \chi =\vartheta$ is solvable if and only if for every $j\in [1..m]$ for which the $j$th row of $R_0$ is the zero vector it follows that the $j$th coordinate of $\vartheta$ is equal to zero.

Therefore, we conclude that all equations in \eqref{eq:matr.eq} are solvable if and only if for every $j\in [1..m]$ for which the $j$th row of $R_0$ is the zero vector, we have that all remaining entries in the $j$th row of the matrix $R=[\,R_0\,|\,R_1\,|\,R_2\,|\,\ldots\,|\,R_s\,|\,\nu\,]$ are equal to zero, and this is equivalent to $\mathrm{rank}(N)=\mathrm{rank}(N^*)$. 

Let us note that if $R_0$ does not have zero rows then all equations in \eqref{eq:matr.eq} are solvable if and only if
\[
\mathrm{rank}(N)=\mathrm{rank}(N^*)=\mathrm{rank}([\,N\,|\,N_{x_i}\,])=
\mathrm{rank}([\,N\,|\,\vartheta\,])=
\mathrm{rank}([\,N\,|\,c_k(N_{x_i})\,])=m\ (\leqslant n),
\]
for all $i\in [1..s]$ and $k\in [1..n]$.
This completes the proof of the theorem.
\end{proof}

\begin{example}\label{ex:non.lin}\rm 
Let $\mathpzc{A}=(S,\sigma,\{\delta_x\}_{x\in X},\Theta )$ be a finite weighted automaton with three vector states over the vector space $\mathbb{R}^2$ and an alphabet $X=\{x,y\}$, where the vector states $\sigma=\alpha_1$, $\alpha_2$ and $\alpha_3$, and the terminal vector $\vartheta $ are given with
\[
\sigma=\alpha_1=\begin{bmatrix} 1 & 0 \end{bmatrix}, \quad 
\alpha_2=\begin{bmatrix} 0 & 1 \end{bmatrix}, \quad
\alpha_3=\begin{bmatrix} 1 & 1 \end{bmatrix}, \quad 
\vartheta=\begin{bmatrix} 0 \\ 0 \\ 1 \end{bmatrix},
\]
while the transition graph is the one given in Figure \ref{fig:tr.gr}.

\begin{figure}[h]
\begin{center}
\begin{tikzpicture}[
roundnode/.style={circle, draw=black!50, fill=black!5, thick, minimum size=5mm}
]
\node[roundnode]        (a1)                                {$\alpha_1$};
\node[roundnode]        (a2)       [right=32mm of a1]       {$\alpha_2$};
\node[roundnode]        (a3)       [above right=20mm of a1]      {$\alpha_3$};

\draw[->,thick,draw=black!50] (a1.east) -- (a2.west) node[pos=0.5,below] {$x$};
\draw[->,thick,draw=black!50] (a1.north east) -- (a3.south west) node[pos=0.5,above,sloped] {$y$};
\draw[->,thick,draw=black!50] (a3.south east) -- (a2.north west) node[pos=0.5,above,sloped] {$y$};
 \tikzset{every loop/.style={->,min distance=8mm,in=-45,out=45,looseness=5}}
 \path (a2) edge [loop right,thick,draw=black!50] node {$x,y$} (a2);
 \tikzset{every loop/.style={->,min distance=8mm,in=45,out=135,looseness=5}}
 \path (a3) edge [loop above,thick,draw=black!50] node {$x$} (a3);
\end{tikzpicture}
\end{center}
\vspace{-10mm}
\caption{The transition graph of the automaton $\mathpzc{A}$ from Example \ref{ex:non.lin}}
\label{fig:tr.gr}
\end{figure}
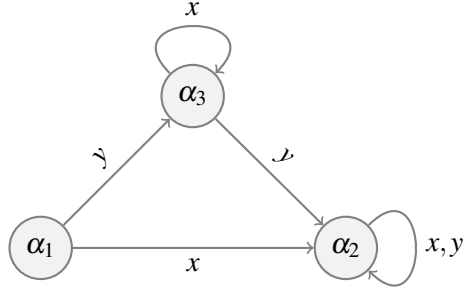

Then the augmented matrix $N^*=[N\mid N_x\mid N_y\mid \vartheta]$ and its reduced row echelon form $RREF(N^*)$ are given by 
\[
N^*=\left[
\begin{array}{cc|cc|cc|c}
   1 & 0 & 0 & 1 & 1 & 1 & 0 \\
   0 & 1 & 0 & 1 & 0 & 1 & 0 \\
   1 & 1 & 1 & 1 & 0 & 1 & 1 
\end{array}\right], \quad
RREF(N^*)=\left[
\begin{array}{cc|cc|cc|c}
   1 & 0 & 0 & 1 & 1 & 1 & 0 \\
   0 & 1 & 0 & 1 & 0 & 1 & 0 \\
   0 & 0 & 1 & -1 & -1 & -1 & 1 
\end{array}\right],
\]
and it is clear that $\mathrm{rank}(N)=2\ne 3=\mathrm{rank}(N^*)$. From there we get that $\mathpzc{A}$ is nonlinear.

Let us also note that the equations \eqref{eq:matr.eq} become
\[
\begin{bmatrix} 1 & 0 \\ 0 & 1 \\ 1 & 1 \end{bmatrix}\cdot 
\begin{bmatrix} x_{11} & x_{12} \\ x_{21} & x_{22} \end{bmatrix} =
\begin{bmatrix} 0 & 1 \\ 0 & 1 \\ 1 & 1 \end{bmatrix}, \qquad
\begin{bmatrix} 1 & 0 \\ 0 & 1 \\ 1 & 1 \end{bmatrix}\cdot 
\begin{bmatrix} y_{11} & y_{12} \\ y_{21} & y_{22} \end{bmatrix}=
\begin{bmatrix} 1 & 1 \\ 0 & 1 \\ 0 & 1 \end{bmatrix}, \qquad
\begin{bmatrix} 1 & 0 \\ 0 & 1 \\ 1 & 1 \end{bmatrix}\cdot 
\begin{bmatrix} x \\ y \end{bmatrix}=
\begin{bmatrix} 0 \\ 0 \\ 1 \end{bmatrix}.
\]
It can be seen that none of these equations has a solution.
\end{example}

For a word function $f:X^*\to \mathbb{R}$ and a word $u\in X^*$, a word function $f_u:X^*\to \mathbb{R}$ defined by
\begin{equation}\label{eq:derivative}
    f_u(v)=f(uv),\qquad\text{for every $v\in X^*$},
\end{equation}
is called the \textit{derivative} of $f$ with respect to $u$. Derivatives of conventional languages are also known as \textit{right quotients}, \textit{quotients} or \textit{residuals} of languages. 

We define a weighted automaton $\mathpzc{A}_f=(S_f,\sigma^f,\delta^f,\Theta^f)$
over a vector space $\mathbb{R}^{X^*}$ as follows: the set $S_f$ of vector states is given by $S_f=\{f_u\,|\, u\in X^*\}$, $\sigma^f=f$, and functions 
$\delta^f:S^f\times X\to S^f$ and $\Theta^f:S^f\to \mathbb{R}$ are given by
\begin{equation}\label{eq:der.aut}
    \delta^f(g,x)=g_x, \quad \Theta^f(g)=g(\varepsilon), \qquad \text{for all $g\in S_f$ and $x\in X$.}
\end{equation}
It is clear that $\mathpzc{A}_f$ is well-defined, and we will call it the \textit{derivative automaton} of the word function $f$.

For a word function $f:X^*\to \mathbb{R}$, the \textit{prefix closure} of $f$ is a word function $\overbar{f}:X^*\to \mathbb{R}$ defined by
\begin{equation}\label{eq:pref.clos}
    \overbar{f}(u)=\sup_{v\in X^*}|f(uv)|=\Vert f_u\Vert_{\infty},
\end{equation}
for every $u\in X^*$.

\begin{theorem}\label{th:der.aut}
The derivative automaton $\mathpzc{A}_f$ of a word function $f\in \mathbb{R}^{X^*}$ is a linear weighted automaton over the  vector space \,$\mathbb{R}^{X^*}$ that computes $f$ and generates the prefix closure $\overbar{f}$ of $f$.

In addition, $\mathpzc{A}_f$ is a minimal weighted automaton over a vector space which computes $f$. 
\end{theorem}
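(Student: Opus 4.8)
The plan is to verify the four assertions in turn: that $\mathpzc{A}_f$ is linear, that $\beh{\mathpzc{A}_f}=f$, that $\beh{\mathpzc{A}_f}_g=\overbar f$, and that $\mathpzc{A}_f$ is minimal among weighted automata over vector spaces computing $f$. For linearity, the key point is that both ingredients of $\mathpzc{A}_f$ are restrictions of maps that are already linear on the whole ambient space $\mathbb{R}^{X^*}$. For each $x\in X$ define $D_x\colon\mathbb{R}^{X^*}\to\mathbb{R}^{X^*}$ by $D_x(g)(v)=g(xv)$ and $E\colon\mathbb{R}^{X^*}\to\mathbb{R}$ by $E(g)=g(\varepsilon)$; linearity of $D_x$ and of $E$ is immediate from the coordinatewise definition of the vector space operations on $\mathbb{R}^{X^*}$. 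Since $D_x(f_u)=f_{ux}\in S_f$ for every $u\in X^*$, the operator $D_x$ maps $S_f$, and hence $\mathrm{span}(S_f)$, into $\mathrm{span}(S_f)$; thus $\delta_{\!\!x}':=D_x|_{\mathrm{span}(S_f)}$ is a linear operator on $\mathrm{span}(S_f)$ whose restriction to $S_f$ is $\delta^f_x$, and $\Theta':=E|_{\mathrm{span}(S_f)}$ is a linear functional on $\mathrm{span}(S_f)$ whose restriction to $S_f$ is $\Theta^f$. This is precisely condition \eqref{eq:lin.waovs}, so $\mathpzc{A}_f$ is a linear weighted automaton over $\mathbb{R}^{X^*}$.

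For the behaviour, I would first prove by induction on $|u|$ that $(\delta^f)^*(\sigma^f,u)=f_u$: for $u=\varepsilon$ this is $f_\varepsilon=f$, and the inductive step reads $(\delta^f)^*(\sigma^f,ux)=\delta^f\bigl((\delta^f)^*(\sigma^f,u),x\bigr)=\delta^f(f_u,x)=(f_u)_x=f_{ux}$, the last equality holding because $(f_u)_x(v)=f_u(xv)=f(uxv)=f_{ux}(v)$. Then $\beh{\mathpzc{A}_f}(u)=\Theta^f\bigl((\delta^f)^*(\sigma^f,u)\bigr)=\Theta^f(f_u)=f_u(\varepsilon)=f(u)$, so $\mathpzc{A}_f$ computes $f$. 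Moreover, since the norm on the function space $\mathbb{R}^{X^*}$ is the supremum norm, $\beh{\mathpzc{A}_f}_g(u)=\Vert(\delta^f)^*(\sigma^f,u)\Vert_\infty=\Vert f_u\Vert_\infty=\sup_{v\in X^*}|f(uv)|=\overbar f(u)$ by \eqref{eq:genAvs} and \eqref{eq:pref.clos}, so $\mathpzc{A}_f$ generates $\overbar f$.

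For minimality, let $\mathpzc{B}=(T,t_0,\gamma,\Xi)$ be any weighted automaton over a vector space $W$ with $\beh{\mathpzc{B}}=f$, where $\gamma\colon T\times X\to T$ has extension $\gamma^*\colon T\times X^*\to T$. For each state $t\in T$ introduce the word function $\psi_t\in\mathbb{R}^{X^*}$ given by $\psi_t(v)=\Xi(\gamma^*(t,v))$, so that $\psi_{t_0}=\beh{\mathpzc{B}}=f$. An easy induction on $|v|$ gives $\gamma^*(t,uv)=\gamma^*(\gamma^*(t,u),v)$, whence for every $u\in X^*$ we get $\psi_{\gamma^*(t_0,u)}(v)=\Xi\bigl(\gamma^*(\gamma^*(t_0,u),v)\bigr)=\Xi(\gamma^*(t_0,uv))=f(uv)=f_u(v)$, i.e.\ $\psi_{\gamma^*(t_0,u)}=f_u$. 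Consequently the map $t\mapsto\psi_t$ sends the reachable set $\{\gamma^*(t_0,u)\mid u\in X^*\}\subseteq T$ onto $S_f=\{f_u\mid u\in X^*\}$, so $|S_f|\le|T|$. Hence the set of vector states of $\mathpzc{A}_f$ has cardinality at most that of the set of states of any weighted automaton over a vector space computing $f$, which is the claimed minimality.

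The computations in the first three parts are short and routine; the only step calling for a little care is the minimality argument, where one must set up the ``behaviour from a state $t$'' of $\mathpzc{B}$ correctly and observe that the estimate $|S_f|\le|T|$ survives even when $\mathpzc{B}$ has states unreachable from $t_0$ — only the reachable part is used, and it is still a subset of $T$.
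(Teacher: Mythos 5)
Your proposal is correct and follows essentially the same route as the paper: linearity via the ambient linear operators $g\mapsto g_x$ and $g\mapsto g(\varepsilon)$ on $\mathbb{R}^{X^*}$, the behaviour computations via $(\delta^f)^*(\sigma^f,u)=f_u$, and minimality via a surjection from the reachable states onto $S_f=\{f_u\mid u\in X^*\}$. The only cosmetic difference is in the minimality step, where you map each state $t$ to its behaviour function $\psi_t$ (making well-definedness automatic), while the paper defines $\phi(\sigma_u)=f_u$ and verifies that $\sigma_u=\sigma_v$ implies $f_u=f_v$ — the same argument in dual form.
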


\begin{proof} For the sake of simplicity, let us assume that $\mathbb{R}^{X^*}=V$.

First, we prove that $\mathpzc{A}_f$ is linear.~To that end, define functions $\delta_x:V\to V$, for every $x\in X$, and $\Theta :V\to \mathbb{R}$ as follows: $\delta_x(g)=g_x$ and $\Theta(g)=g(\varepsilon)$, for all $g\in V$ and $x\in X$.~Further, consider arbitrary $g,h\in V$ and $s,t\in \mathbb{R}$. Then for arbitrary $x\in X$ and $u\in X^*$ we have that
\[
(sg+th)_x(u)=(sg+th)(xu)=sg(xu)+th(xu)=sg_x(u)+th_x(u)=(sg_x+th_x)(u),
\]
so we conclude that $(sg+th)_x=sg_x+th_x$, and now we have that
\begin{align*}
  \delta_x(sg+th)&=(sg+th)_x=sg_x+th_x=s\delta_x(g)+t\delta_x(h), \\ 
  \Theta (sg+th)&= (sg+th)(\varepsilon)= sg(\varepsilon)+th(\varepsilon) =s\Theta(g)+t\Theta(h).  
\end{align*}
Hence, $\delta_x$, $x\in X$, and $\Theta$ are linear operators on $V$, and it is clear that $\delta^f_x$ (where $\delta^f_x(g)=\delta^f(g,x)$) is the  restriction of $\delta_x$ to $S_f$, for each $x\in X$, and $\Theta^f$ is 
the restriction of $\Theta$ on $S_f$. This means that $\mathpzc{A}_f$ is a linear weighted automaton over the vector space $V$. 
Next, for an arbitrary $u\in X^*$ we have that
\begin{align*}
  &\beh{\mathpzc{A}_f}(u)= \Theta((\delta^f)^*(\sigma^f,u))= \Theta((\delta^f)^*(f,u)) = \Theta(f_u)=f_u(\varepsilon)=f(u\varepsilon)=f(u), \\
  &\beh{\mathpzc{A}_f}_g(u)=\Vert (\delta^f)^*(\sigma^f,u)\Vert_{\infty}=
  \Vert (\delta^f)^*(f,u)\Vert_{\infty}=\Vert f_u\Vert_{\infty}=\overbar{f}(u),
\end{align*}
which means that the automaton $\mathpzc{A}_f$ computes $f$ and generates $\overbar{f}$.

Let $\mathpzc{A}=(S,\sigma,\delta,\Theta)$ be an arbitrary weighted automaton over a vector space that computes $f$, and let $S'=\{\sigma_u\,|\, u\in X^*\}\subseteq S$. Define a function $\phi:S'\to S_f$ by putting that $\phi(\sigma_u)=f_u$, for every $u\in X^*$. First we prove that $\phi $ is well-defined, i.e., that for any $u,v\in X^*$, from $\sigma_u=\sigma_v$ it follows $f_u=f_v$. Thus, consider $u,v\in X^*$ such that $\sigma_u=\sigma_v$, and an arbitrary $w\in X^*$. Then
\begin{align*}
    f_u(w)&=f(uw)=\beh{\mathpzc{A}}(uw)=\Theta (\delta^*(\sigma,uw))=\Theta (\delta^*(\delta^*(\sigma,u),w)) \\
    &=\Theta (\delta^*(\sigma_u,w))=\Theta (\delta^*(\sigma_v,w))
    =\Theta (\delta^*(\delta^*(\sigma,v),w))\\&=\Theta (\delta^*(\sigma,vw))= \beh{\mathpzc{A}}(vw)=f(vw)=f_v(w),
\end{align*}
so we conclude that $f_u=f_v$.~Therefore, we get that $\phi $ is a well-defined function, and it is obvious that $\phi$ is surjective. This means that the cardinality of $S_f$ is less than or equal to the cardinality of $S'$, which is less than or equal to the cardinality of $S$.~From there, we conclude that $\mathpzc{A}_f$ is a minimal weighted automaton over a vector space which computes $f$.
\end{proof}

\bibliographystyle{eptcs}
\bibliography{approxbib}

\end{document}